\def\ps@pprintTitle{%
   \let\@oddhead\@empty
   \let\@evenhead\@empty
   \def\@oddfoot{\reset@font\hfil\thepage\hfil}
   \let\@evenfoot\@oddfoot
}
\def\blfootnote{\gdef\@thefnmark{}\@footnotetext}
\newtheorem{theorem}{Theorem}[section]
\newtheorem{proposition}[theorem]{Proposition}
\theoremstyle{definition}
\newtheorem{definition}[theorem]{Definition}
\theoremstyle{remark}
\newcommand*{\ldblbrace}{\{\mskip-6mu\{}
\newcommand*{\rdblbrace}{\}\mskip-6mu\}}
\newcommand{\F}{\mathbb{F}}
\newcommand{\Fq}{\mathbb{F}_{q}}
\newcommand{\Z}{\mathbb{Z}}
\newcommand{\calG}{\mathcal{G}}
\newcommand{\calM}{\mathcal{M}}
\newcommand{\calP}{\mathcal{P}}
\newcommand{\mspan}{\operatorname{mspan}}
\newcommand{\rank}{\operatorname{rank}}
\newcommand{\height}{\operatorname{ht}}
\newcommand{ \bs }[1]{ \boldsymbol{#1} }
\numberwithin{equation}{section}
\begin{document}

\begin{frontmatter}

\title{Vector Multispaces and Multispace Codes\tnoteref{t1}}

\tnotetext[t1]{This work was supported by the Secretariat for Higher Education
and Scientific Research of the Autonomous Province of Vojvodina (project number
142-451-2686/2021).}

\author[ftn]{Mladen~Kova\v{c}evi\'{c} \orcidlink{0000-0002-2395-7628}}
%\cortext[cor]{kmladen@uns.ac.rs}

\address[ftn]{Faculty of Technical Sciences, 
              University of Novi Sad, Serbia\\
							email: kmladen@uns.ac.rs}

%\subjclass[2020]{15A03, 94B25, 05E16, 06B99}

%\date{today}

\begin{abstract}
Basic algebraic and combinatorial properties of finite vector spaces in which
individual vectors are allowed to have multiplicities larger than $ 1 $ are
derived.
An application in coding theory is illustrated by showing that multispace
codes that are introduced here may be used in random linear network coding
scenarios, and that they generalize standard subspace codes (defined
in the set of all subspaces of $ \F_q^n $) and extend them to an infinitely
larger set of parameters.
In particular, in contrast to subspace codes, multispace codes of arbitrarily
large cardinality and minimum distance exist for any fixed $ n $ and $ q $.
\end{abstract}

\begin{keyword}
multiset\sep multispace\sep projective space\sep Grassmannian
\sep linearized polynomial\sep modular lattice\sep subspace code
\sep random linear network coding
\end{keyword}

\end{frontmatter}

\blfootnote{\textup{2020} \textit{Mathematics Subject Classification}:
Primary: 15A03, 94B25. Secondary: 05E16, 06B99.}

\section{Introduction}
\label{sec:intro}

The aim of this note is to state basic properties of vector spaces in which
individual vectors are allowed to have multiplicities larger than $ 1 $.
Despite being an obvious linear analog of a multiset, this notion does not
seem to have been thoroughly studied in linear algebra%
\footnote{With several exceptions, including the recent works \cite{krotov, krotov2}
on certain covering problems.} \cite{roman}.
Our main motivation for analyzing it is the fact that it offers the possibility
of extending the so-called subspace codes \cite{koetter+kschischang, wang},
defined in the set of all subspaces of a given finite vector space, to a much
larger (in fact infinitely larger) set of parameters.

%In Section~\ref{sec:multispaces} we define multispaces and state some of
%their basic algebraic and combinatorial properties, while
%in Section~\ref{sec:codes} we discuss in more detail the mentioned application
%in coding theory.

\subsection*{Terminology and notation}

A multiset $ M $ over a universe $ U $ is a pair $ (U, \mu_M) $, where
$ \mu_M \colon U \to \{0, 1, 2, \ldots\} $ is a multiplicity function,
i.e., $ \mu_M(a) $ is the multiplicity of an element $ a \in U $ in $ M $.
The underlying set of $ M $ is defined as
$ \underline{M} = \operatorname{supp} \mu_M = \{a \in U : \mu_M(a) > 0\} $.
In cases when $ \mu_M(U) \subseteq \{0, 1\} $, $ M $ is identified with its
underlying set $ \underline{M} $.
The size, or cardinality, of a multiset $ M $ is the sum of the multiplicities
of all its elements, $ |M| = \sum_{a \in U} \mu_M(a) $.
The intersection and union of two multisets $ M_1 = (U, \mu_{M_1}) $,
$ M_2 = (U, \mu_{M_2}) $, are the multisets $ M_1 \cap M_2 = (U, \min\{\mu_{M_1}, \mu_{M_2}\}) $
and $ M_1 \cup M_2 = (U, \max\{\mu_{M_1}, \mu_{M_2}\}) $, respectively.
We say that $ M_1 $ is contained in $ M_2 $, written $ M_1 \subseteq M_2 $,
if $ \mu_{M_1} \leqslant \mu_{M_2} $.

A multiset can also be specified (somewhat informally) by explicitly listing
all its elements.
For example, we think of $ M = \ldblbrace a, a, b, b, b \rdblbrace $, $ a \neq b $,
also written as $ M = \ldblbrace a^2, b^3 \rdblbrace $ when there is no risk of
confusing superscripts with exponents, as a multiset with $ \underline{M} = \{a, b\} $,
$ \mu_M(a) = 2, \mu_M(b) = 3 $, and $ |M| = 5 $.
Given another multiset $ M' = \ldblbrace a^3, b^1, c^2 \rdblbrace $,
we have $ M \cap M' = \ldblbrace a^2, b^1 \rdblbrace $ and
$ M \cup M' = \ldblbrace a^3, b^3, c^2 \rdblbrace $.

Throughout the paper, $ q $ is assumed to be a prime power, $ \F_q $ denotes the
field with $ q $ elements, and $ \F^n_q $ denotes the $ n $-dimensional vector
space over $ \F_q $.
The set of all subspaces of $ \F_q^n $ is denoted by $ \calP_q(n) $ (the so-called
projective space), and the set of all $ k $-dimensional subspaces of $ \F_q^n $ by
$ \calG_q(n, k) $ (the so-called Grassmannian).
The cardinality of $ \calG_q(n, k) $ is expressed through the Gaussian, or
$ q $-binomial, coefficients \cite[Section 24]{vanlint+wilson}:
\begin{equation}
\label{eq:grassmannian}
  \left| \calG_q(n, k) \right|
	= \binom{n}{k}_{\! q}
	= \frac{(q^n-1)\cdots(q^{n-k+1}-1)}{(q^k-1)\cdots(q-1)} ,
\end{equation}
where it is understood that $ \binom{n}{0}_{\!q} = 1 $, and
$ \binom{n}{k}_{\!q} = 0 $ when $ k < 0 $ or $ k > n $.

The $ \F_q $-linear span of a set $ B \subseteq \F_q^n $, which is the set
of all vectors in $ \F_q^n $ that can be obtained as a linear combination
of those in $ B $ (with coefficients from $ \F_q $), is denoted by
$ \operatorname{span}_q(B) $.

\section{Vector Multispaces}
\label{sec:multispaces}

The following definition appears also in \cite{krotov} (in a slightly different form).

\begin{definition}
\label{def:mspan}
Let $ B = \ldblbrace b_1, \ldots, b_m \rdblbrace $ be a multiset over
$ \F_q^n $ ($ b_i $'s are not necessarily all distinct).
The $ \F_q $-linear \emph{multispan} of $ B $, denoted $ \mspan_q(B) $,
is defined as the multiset over $ \Fq^n $ in which the multiplicity of
an element $ v $ is equal to the number of different $ m $-tuples
$ (\alpha_{1}, \ldots, \alpha_{m}) \in \F_q^m $ satisfying
$ \sum_{i=1}^m \alpha_{i} b_i = v $.
If $ B $ is empty, we define $ \mspan_q(B) = \{0\} $.
\end{definition}

It follows from the definition that
\begin{equation}
\label{eq:mspansize}
  \left| \mspan_{q}(B) \right| = q^{|B|} .
\end{equation}

\begin{definition}
\label{def:mspace}
A multiset $ W $ over $ \F_q^n $ is said to be a \emph{multispace} over
$ \F_q^n $ if there exists a multiset $ B $ such that $ W = \mspan_q(B) $.
We then say that $ W $ is generated by $ B $ and define the rank of $ W $
as $ \rank(W) = |B| = \log_q\!|W| $ and the dimension of $ W $ as
$ \dim(W) = \dim(\operatorname{span}_q(B)) $.

We denote the collection of all rank-$ m $ multispaces over $ \F_q^n $
by $ \calM_q(n, m) $, and the collection of all multispaces over $ \F_q^n $
by $ \calM_q(n) = \bigcup_{m=0}^\infty \calM_q(n, m) $.
%
%Additionally, we identify all multisets $ W $ such that $ \underline{W} = \{0\} $
%with $ \{0\} $ and consider this a multispace of rank $ 0 $ (and dimension
%$ 0 $).
%{\color{red} Mislim da ipak $ \{0^{q^t}\} $ treba posmatrati kao multiprostor ranga $ t $.}
\end{definition}

A multispace $ W $ over $ \F_q^n $ is a subspace of $ \F_q^n $, i.e., the
multiplicity of each element in $ W $ is $ 1 $, if and only if it is generated
by a set of linearly independent vectors.
In that case, and only in that case, we have $ \rank(W) = \dim(W) $.
The following claim, which states that a multispace is uniquely determined
by its underlying space and its rank, is a generalization of this fact.

\begin{proposition}
\label{thm:multiplicity}
A multiset $ W $ over $ \F_q^n $ is a multispace if and only if the following
two conditions hold:
(i) $ \underline{W} $ is a subspace of $ \F_q^n $,
(ii) every element of $ W $ has multiplicity $ q^{t} $, for some nonnegative
integer $ t $.
Furthermore, $ t = \rank(W) - \dim(W) $.
\end{proposition}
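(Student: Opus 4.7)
The plan is to package the generating multiset as a single $\F_q$-linear map and then read off both claims from the rank--nullity theorem.

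For the ``only if'' direction, suppose $W = \mspan_q(B)$ with $B = \ldblbrace b_1, \ldots, b_m \rdblbrace$, and consider the $\F_q$-linear map $\phi \colon \F_q^m \to \F_q^n$ defined by $\phi(\alpha_1, \ldots, \alpha_m) = \sum_{i=1}^m \alpha_i b_i$. By Definition~\ref{def:mspan}, the multiplicity of $v$ in $W$ is exactly $|\phi^{-1}(v)|$, so $\underline{W}$ is the image of $\phi$, a subspace of $\F_q^n$; this proves (i). For (ii), for any $v \in \underline{W}$ the fiber $\phi^{-1}(v)$ is a coset of $\ker\phi$, hence has size $|\ker\phi| = q^{m - \dim \operatorname{im}(\phi)} = q^{m - \dim(W)}$ by rank--nullity. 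Since $m = |B| = \rank(W)$, this multiplicity equals $q^{\rank(W) - \dim(W)}$ and is, in particular, the same for every element of $\underline{W}$.

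For the converse, assume (i) and (ii) and set $d = \dim(\underline{W})$. Choose any basis $b_1, \ldots, b_d$ of $\underline{W}$ and form the multiset $B = \ldblbrace b_1, \ldots, b_d, 0, \ldots, 0 \rdblbrace$ with $t$ copies of the zero vector adjoined. A direct check shows $\mspan_q(B) = W$: each $v \in \underline{W}$ has a unique basis expansion $v = \sum_{i=1}^d \alpha_i b_i$, and the $t$ scalars multiplying the zero vectors may be chosen arbitrarily, yielding exactly $q^t$ preimages; vectors outside $\underline{W}$ receive multiplicity $0$, matching $W$. This simultaneously gives $\rank(W) = d + t$, confirming $t = \rank(W) - \dim(W)$.

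No step is genuinely obstructive; the entire argument hinges on the standard observation that the preimages of a fixed vector under an $\F_q$-linear map form a coset of its kernel. The only piece of bookkeeping to watch is distinguishing $\operatorname{span}_q(B)$ (a set) from $\mspan_q(B)$ (a multiset), but both are handled uniformly by the map $\phi$, so this causes no real difficulty.
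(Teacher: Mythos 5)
Your proposal is correct and follows essentially the same route as the paper: the paper's observation that $\alpha+\beta$ produces $v$ iff $\beta$ produces $0$ is exactly your statement that each fiber is a coset of $\ker\phi$, and the converse construction (basis of $\underline{W}$ plus $t$ copies of $0$) is identical. The only cosmetic difference is that the paper deduces the common multiplicity from $|W|/|\underline{W}|$ rather than from rank--nullity.
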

\begin{proof}
Let $ W $ be a multispace over $ \Fq^n $.
Condition \emph{(i)} follows directly from Definitions \ref{def:mspan}
and \ref{def:mspace}: if $ W $ is generated by $ B $, then
$ \underline{W} = \operatorname{span}_q(B) \in \calP_q(n) $.
To establish condition \emph{(ii)} it suffices to show that every element
in $ W $ has the same multiplicity; since $ |W| = q^{\rank(W)} $ and
$ |\underline{W}| = q^{\dim(W)} $, it will follow that this multiplicity
is $ |W| / |\underline{W}| = q^{\rank(W) - \dim(W)} $.
Suppose that $ W $ is generated by
$ B = \ldblbrace b_1, \ldots, b_m \rdblbrace $, $ b_i \in \F_q^n $, where
$ m = \rank(W) $.
Let $ \alpha = ( \alpha_{1}, \ldots, \alpha_{m} ) \in \Fq^m $ be an $ m $-tuple
of coefficients that produces a vector $ v \in \underline{W} $, i.e.,
$ \sum_{i=1}^m \alpha_{i} b_i = v $.
Then it is obvious that the $ m $-tuple $ \alpha + \beta $, where
$ \beta \in \Fq^m $, produces $ v $ as well if and only if the $ m $-tuple
$ \beta $ produces $ 0 $.
This implies that, if the vector $ 0 $ has multiplicity $ s $ in $ W $,
then so does every other vector from $ \underline{W} $.

For the converse statement, let $ B' = \{b_1, \ldots, b_{\dim(\underline{W})}\} $
be a basis of $ \underline{W} $, and consider the multiset
$ B = \ldblbrace 0^{t}, b_1, \ldots, b_{\dim(\underline{W})} \rdblbrace $.
%$ B = \ldblbrace \underbrace{0, \ldots, 0}_{t}, b_1, \ldots, b_{\dim(\underline{W})} \rdblbrace $.
Then $ \mspan_q(B) $ is a multispace whose underlying space is $ \underline{W} $
and in which the multiplicity of each element is $ q^t $, which is precisely $ W $.
Hence, $ W $ is a multispace of rank $ \dim(\underline{W}) + t $, as claimed.
\end{proof}

Note that the generating set for $ W $ could have also been obtained from the
basis of $ \underline{W} $ by adding to this basis any $ \rank(W) - \dim(W) $
vectors from $ \underline{W} $, not necessarily $ \rank(W) - \dim(W) $ copies
of the $ 0 $ vector (different multisets can generate the same multispace).
A simple class of transformations of the generating multiset that preserve the
resulting multispace is the full-rank linear transformations.

\begin{proposition}
\label{thm:fullrank}
  Let $ \bs{b} = (b_1, \ldots, b_m) $, $ b_i \in \Fq^n $.
Let $ T $ be an $ m \times m $ full-rank matrix over $ \Fq $, and let
$ \bs{b'} = \bs{b} T $.
Let $ B $ denote the multiset of elements of $ \bs{b} $, i.e.,
$ B = \ldblbrace b_1, \ldots, b_m \rdblbrace $, and likewise
$ B' = \ldblbrace b'_1, \ldots, b'_m \rdblbrace $.
Then $ \mspan_q(B) = \mspan_q(B') $.
\end{proposition}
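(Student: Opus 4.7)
The plan is to verify the equality $\mspan_q(B) = \mspan_q(B')$ directly from Definition \ref{def:mspan}, by showing that every $v \in \F_q^n$ has the same multiplicity in both multispans. Since these multiplicities are, by definition, the cardinalities of the sets of coefficient tuples $\alpha \in \F_q^m$ producing $v$, it suffices to exhibit a bijection between these two sets induced by the matrix $T$.

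First I would fix an arbitrary $v \in \F_q^n$ and rewrite the defining conditions compactly. Viewing $\alpha$ as a column vector, the equation $\sum_{i=1}^m \alpha_i b_i = v$ becomes $\bs{b}\alpha = v$, and analogously $\sum_{i=1}^m \alpha_i b'_i = v$ becomes $\bs{b'}\alpha = v$. Substituting $\bs{b'} = \bs{b} T$, the latter reads $\bs{b}(T\alpha) = v$. Hence $\alpha$ is a valid coefficient tuple producing $v$ from $B'$ if and only if $T\alpha$ is a valid coefficient tuple producing $v$ from $B$.

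The crucial step is then to invoke the full-rank assumption: since $T$ is invertible over $\F_q$, the map $\alpha \mapsto T\alpha$ is a bijection of $\F_q^m$ onto itself. This at once yields a bijection between $\{\alpha \in \F_q^m : \bs{b'}\alpha = v\}$ and $\{\gamma \in \F_q^m : \bs{b}\gamma = v\}$, so these two sets have the same cardinality. Since $v$ was arbitrary, the multiplicity of every vector in $\mspan_q(B')$ agrees with its multiplicity in $\mspan_q(B)$, which is the desired equality of multisets.

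There is no substantive obstacle here; the only care required is in matrix-vector conventions, so that the identity $\bs{b'}\alpha = \bs{b}(T\alpha)$ is parsed correctly. As a side remark, one sees in passing that permutations of the $b_i$ (the special case in which $T$ is a permutation matrix) leave the multispan unchanged, which is consistent with treating $B$ as a multiset rather than as an ordered tuple in Definition \ref{def:mspan}.
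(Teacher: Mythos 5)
Your proof is correct, but it takes a genuinely different route from the paper's. The paper does not argue at the level of coefficient tuples at all: it observes that $\mspan_q(B)$ and $\mspan_q(B')$ both have rank $m$ and, since $T$ is full-rank, the same underlying space $\operatorname{span}_q(B) = \operatorname{span}_q(B')$, and then invokes Proposition~\ref{thm:multiplicity} (a multispace is uniquely determined by its underlying space and its rank) to conclude equality. Your argument instead goes back to Definition~\ref{def:mspan} and exhibits, for each fixed $v$, the bijection $\alpha \mapsto T\alpha$ between $\{\alpha : \bs{b'}\alpha = v\}$ and $\{\gamma : \bs{b}\gamma = v\}$, so the two multiplicity functions agree pointwise. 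What the paper's approach buys is brevity, given that Proposition~\ref{thm:multiplicity} has already been established; what yours buys is self-containment and slightly more information --- it identifies the explicit multiplicity-preserving correspondence rather than deducing equality from a counting characterization, and it would survive even in settings where the ``underlying space plus rank'' classification were unavailable. Your closing remark about permutation matrices and the well-definedness of the count with respect to the enumeration of the multiset $B$ is a worthwhile point of care that the paper leaves implicit.
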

\begin{proof}
%If $ \bs{b} = (0, \ldots, 0) $, then $ \bs{b'} = (0, \ldots, 0) $,
%and hence $ \mspan_q(B) = \mspan_q(B') = \{0\} $.
Clearly, $ \rank(\mspan_q(B)) = \rank(\mspan_q(B')) = m $.
Furthermore, since $ \bs{b'} $ is obtained from $ \bs{b} $ by a full-rank
linear transformation, we necessarily have
$ \operatorname{span}_q(B) = \operatorname{span}_q(B') $.
This means that the multispaces $ \mspan_q(B) $ and $ \mspan_q(B') $ have
the same rank and the same underlying space, and so, by Proposition~\ref{thm:multiplicity},
they are equal.
\end{proof}

The following proposition gives an algebraic characterization of multispaces.
Let $ \F_{q^\ell} $ be an extension field of $ \Fq $.
A polynomial of the form
\begin{equation}
\label{eq:qpolynomial}
  L(x) = \sum_{i=0}^d \alpha_i x^{q^i}
\end{equation}
with $ \alpha_i \in \F_{q^\ell} $ is called a linearized polynomial, or
$ q $-polynomial, over $ \F_{q^\ell} $ \cite[Chapter 3]{fields}.
Such a polynomial is a linear operator on any extension field $ \F $ of
$ \F_{q^\ell} $ (here extension field is thought of as a vector space
over the base field).

\begin{proposition}
\label{thm:qpolynomial}
  $ W $ is a multispace over $ \Fq^n $ if and only if it is the multiset
of all roots of a linearized polynomial over a subfield $ \F_{q^\ell} $ of
$ \F_{q^n} $, for some $ \ell | n $.
\end{proposition}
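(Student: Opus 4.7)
The plan is to reduce both directions to the classical subspace polynomial $M_V(x) = \prod_{v \in V}(x-v)$, combined with the observation that raising a linearized polynomial to a $q^t$-th power leaves it linearized and multiplies every root's multiplicity by $q^t$. This meshes perfectly with Proposition~\ref{thm:multiplicity}, which describes a multispace as a subspace $V = \underline{W}$ equipped with a uniform multiplicity $q^t$, where $t = \rank(W) - \dim(W)$.

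For the $(\Rightarrow)$ direction, I would extract $V$ and $t$ from $W$ via Proposition~\ref{thm:multiplicity}, and then invoke the classical result (see, e.g., \cite[Chapter 3]{fields}) that the subspace polynomial $M_V(x) = \prod_{v \in V}(x-v)$ of any $\Fq$-subspace $V \subseteq \F_{q^n}$ is a $q$-polynomial with coefficients in $\F_{q^n}$. Setting $L(x) := M_V(x)^{q^t}$, a short check shows $L$ remains linearized: in characteristic $p$, the map $z \mapsto z^{q^t}$ is additive, so $\bigl(\sum_i c_i x^{q^i}\bigr)^{q^t} = \sum_i c_i^{q^t} x^{q^{i+t}}$. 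By construction the multiset of roots of $L$ is exactly $V$ with each element repeated $q^t$ times, i.e.\ $W$. The divisibility requirement is met trivially by taking $\ell = n$.

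For the $(\Leftarrow)$ direction, I would write $L(x) = \sum_{i=s}^d \alpha_i x^{q^i} \in \F_{q^\ell}[x]$ with $\alpha_s \neq 0$, and factor out the inseparable part. Since Frobenius is a bijection on $\F_{q^\ell}$, each $\alpha_i$ has a unique $q^s$-th root $\beta_i$ in $\F_{q^\ell}$, and one verifies $L(x) = M(x)^{q^s}$ with $M(x) = \sum_{i=0}^{d-s}\beta_{i+s} x^{q^i}$. Now $M$ is a separable linearized polynomial (its coefficient on $x$ is $\beta_s \neq 0$), and being $\Fq$-linear as an operator on $\F_{q^n}$, its root set there is an $\Fq$-subspace. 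Hence the multiset of roots of $L$ in $\F_{q^n}$ is this subspace with uniform multiplicity $q^s$, which by Proposition~\ref{thm:multiplicity} is a multispace.

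The one nontrivial ingredient I am borrowing without proof is the classical fact that $M_V(x)$ is itself a linearized polynomial; this is standard (the slickest derivation is by induction on $\dim V$ using the identity $M_V(x) = M_{V'}(x)^q - c^{q-1} M_{V'}(x)$, where $V' \subset V$ is a hyperplane and $c = M_{V'}(v_0)$ for any $v_0 \in V \setminus V'$), and I would simply cite it rather than re-derive it. Aside from this, everything else is a routine manipulation, and the main conceptual step is the reduction $L = M^{q^s}$, which cleanly separates the multiplicity $q^s$ carried by the polynomial from the subspace structure carried by $M$.
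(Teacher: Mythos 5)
Your proof is correct and follows essentially the same route as the paper: both directions reduce to Proposition~\ref{thm:multiplicity} together with the two classical facts that $\prod_{v \in V}(x-v)^{q^t}$ is a $q$-polynomial and that the roots of a $q$-polynomial in $\F_{q^n}$ form a subspace with uniform multiplicity a power of $q$. The only difference is that the paper invokes these facts as black boxes (Theorems 3.50 and 3.52 of \cite{fields}), whereas you unpack them --- your factorization $L = M^{q^s}$ with $M$ separable is precisely the standard proof of the first of these.
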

\begin{proof}
Let $ L(x) $ be a linearized polynomial over $ \F_{q^\ell} $, and suppose
that an extension field $ \F_{q^n} $ of $ \F_{q^\ell} $ contains all the
roots of $ L(x) $.
Then, by \cite[Theorem 3.50]{fields}, every root has the same multiplicity,
which is of the form $ q^t $, and the roots form a vector subspace of $ \F_{q^n} $
(where $ \F_{q^n} $ is regarded as a vector space over $ \Fq $).
By Proposition \ref{thm:multiplicity}, this means that the multiset of all
roots of $ L(x) $ is a multispace over $ \Fq^n $.

Conversely, let $ W $ be a multispace over $ \Fq^n $, and denote $ t = \rank{W}-\dim{W} $.
Then
\begin{equation}
  L(x) = \prod_{v \in W} (x - v) = \prod_{v \in \underline{W}} (x - v)^{q^{t}}
\end{equation}
is a linearized polynomial over $ \F_{q^n} $ (where, again, the vectors in
$ \Fq^n $ are identified with elements of $ \F_{q^n} $) \cite[Theorem 3.52]{fields}.
It is easy to see that the multiset of roots of $ L(x) $ is precisely $ W $.
\end{proof}

%Let us denote the collection of all rank-$ m $ multispaces over $ \F_q^n $
%by $ \calM_q(n; m) $, and the collection of all multispaces over $ \F_q^n $
%by $ \calM_q(n) = \bigcup_{m=0}^\infty \calM_q(n; m) $.

We next consider the properties of the partially ordered set $ (\calM_q(n), \subseteq) $.
Recall \cite{birkhoff} that a poset $ (P, \preceq) $ is said to be a lattice
if every two elements $ x, y \in P $ have a greatest lower bound, denoted
$ x \land y $ (that is, $ x \land y \preceq x $, $ x \land y \preceq y $,
and any other element $ w $ satisfying $ w \preceq x $, $ w \preceq y $,
also satisfies $ w \preceq x \land y $), and a least upper bound, denoted
$ x \lor y $.
The lattice is graded if there exists a function $ r \colon P \to \Z $ such
that, if $ y $ covers $ x $ (meaning that $ x \preceq y $ and there exists
no $ z $ satisfying $ x \prec z \prec y $), then $ r(y) = r(x) + 1 $.
A lattice is modular if $ x \prec y $ implies $ x \lor (z \land y) = (x \lor z) \land y $.

\begin{proposition}
\label{thm:lattice}
  The poset $ (\calM_q(n), \subseteq) $ is a graded modular lattice.
\end{proposition}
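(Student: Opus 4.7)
The plan is to exploit Proposition \ref{thm:multiplicity}, which shows that every multispace $ W \in \calM_q(n) $ is uniquely determined by the pair $ (\underline{W},\, t_W) $, where $ t_W := \rank(W) - \dim(W) \in \mathbb{Z}_{\geq 0} $ is the common exponent of the multiplicities of the elements of $ W $. The first step would be to translate multiset inclusion into this parametrization: a short pointwise check, using that each element of $ \underline{W} $ has multiplicity $ q^{t_W} $ and elements outside $ \underline{W} $ have multiplicity $ 0 $, yields
\[
  W_1 \subseteq W_2 \ \iff \ \underline{W_1} \subseteq \underline{W_2} \text{ and } t_{W_1} \leq t_{W_2} .
\]
In other words, $ (\calM_q(n), \subseteq) $ is order-isomorphic to the product poset $ \calP_q(n) \times \mathbb{Z}_{\geq 0} $ with componentwise ordering.

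From this product description the lattice operations can be read off immediately. The meet should be
\[
  W_1 \wedge W_2 \ = \ \bigl( \underline{W_1} \cap \underline{W_2},\, \min\{t_{W_1}, t_{W_2}\} \bigr) ,
\]
which happens to coincide with the ordinary multiset intersection $ W_1 \cap W_2 $, and the join should be
\[
  W_1 \vee W_2 \ = \ \bigl( \underline{W_1} + \underline{W_2},\, \max\{t_{W_1}, t_{W_2}\} \bigr),
\]
which in general strictly contains the multiset union $ W_1 \cup W_2 $ (the latter is typically not a multispace, as $ \underline{W_1} \cup \underline{W_2} $ need not be a subspace). That these are the greatest lower and least upper bounds is immediate from the translated inclusion above.

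For the grading property I would take $ r = \rank $. Because one factor of the product is a chain, every cover in $ \calM_q(n) $ alters exactly one coordinate by a single step: either $ \underline{W_2} $ covers $ \underline{W_1} $ in $ \calP_q(n) $ (so $ \dim $ grows by $ 1 $) with $ t_{W_1} = t_{W_2} $, or $ \underline{W_1} = \underline{W_2} $ and $ t_{W_2} = t_{W_1} + 1 $; in both cases $ \rank(W_2) - \rank(W_1) = 1 $. Finally, modularity would reduce to modularity of the two factors: $ \calP_q(n) $ is modular via the classical identity $ U + (Z \cap V) = (U + Z) \cap V $ for subspaces with $ U \subseteq V $, the chain $ \mathbb{Z}_{\geq 0} $ is trivially modular, and the modular identity factors componentwise through any direct product of lattices.

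The only genuinely delicate step is the inclusion-to-pair translation in the very first paragraph, since it is what connects the intrinsic multiset order on $ \calM_q(n) $ to the familiar product lattice. Once that equivalence is in place, every remaining assertion reduces to standard facts about the subspace lattice and about products of modular lattices, so I do not expect any serious further obstacle.
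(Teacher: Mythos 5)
Your proof is correct, but it takes a genuinely different route from the paper's. You first package Proposition~\ref{thm:multiplicity} into an order isomorphism $ (\calM_q(n), \subseteq) \cong \calP_q(n) \times \mathbb{Z}_{\geqslant 0} $ via $ W \mapsto (\underline{W}, \height(W)) $ (the inclusion-to-pair translation you flag as the delicate step does check out, and surjectivity of the map is exactly the converse direction of that proposition), and then you import meets, joins, gradedness, and modularity componentwise from the two factors, using that a direct product of modular lattices is modular and that covers in a product move exactly one coordinate by one step. The paper instead works directly at the multiset level: it computes $ W_1 \wedge W_2 = W_1 \cap W_2 $ and identifies $ W_1 \vee W_2 $ explicitly, handles gradedness by the same two-case analysis of covers that your product picture encodes, and derives modularity not from a product decomposition but from the rank valuation $ \rank(W_1 \wedge W_2) + \rank(W_1 \vee W_2) = \rank(W_1) + \rank(W_2) $ together with Birkhoff's theorem that a metric lattice is modular. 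Your approach is more structural and arguably stronger, since it identifies the lattice completely as a product and makes all four claims corollaries of standard facts; the paper's valuation argument has the side benefit of directly setting up the identity \eqref{eq:rankval} and hence the metric \eqref{eq:metric} that the rest of the paper is built on, which in your formulation would be recovered as the sum of the distances in the two factors.
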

\begin{proof}
Consider two multispaces $ W_1, W_2 \in \calM_q(n) $.
By definition of multiset intersection, $ W_1 \cap W_2 $ is a multiset
whose underlying set is
$ \underline{W_1 \cap W_2} = \underline{W_1} \cap \underline{W_2} $ and
in which the multiplicity of each element from $ \underline{W_1 \cap W_2} $
is $ \min\!\left\{ q^{\rank(W_1) - \dim(W_1)} , \, q^{\rank(W_2) - \dim(W_2)} \right\} $.
Since the underlying set is in fact a subspace of $ \Fq^n $, and each
element has multiplicity of the form $ q^t $, it follows from
Proposition~\ref{thm:multiplicity} that $ W_1 \cap W_2 $ is a multispace of
rank
\begin{equation}
\label{eq:meetrank}
\begin{aligned}
  &\rank(W_1 \cap W_2)  \\
	 &= \dim\!\left(\underline{W_1} \cap \underline{W_2} \right)
     + \min\!\big\{ \rank(W_1) - \dim(W_1), \,  \rank(W_2) - \dim(W_2)\big\} .
\end{aligned}
\end{equation}
Furthermore, it is easy to see that any multispace that is contained
in both $ W_1 $ and $ W_2 $ is also contained in $ W_1 \cap W_2 $.
Consequently, the greatest lower bound on $ W_1 $ and $ W_2 $ in the
poset $ (\calM_q(n), \subseteq) $, denoted $ W_1 \land W_2 $, exists
and is given by $ W_1 \cap W_2 $.

The least upper bound on $ W_1 $ and $ W_2 $ in the poset
$ (\calM_q(n), \subseteq) $ exists as well -- it is the multispace,
denoted $ W_1 \lor W_2 $, with the underlying space
$ \underline{W_1 \lor W_2} = \underline{W_1} + \underline{W_2} $
and rank
\begin{equation}
\label{eq:joinrank}
\begin{aligned}
	&\rank(W_1 \lor W_2)  \\
	 &= \dim\!\left(\underline{W_1} + \underline{W_2}\right)
     + \max\!\big\{ \rank(W_1) - \dim(W_1), \,  \rank(W_2) - \dim(W_2)\big\} .
\end{aligned}
\end{equation}
To see this, note that (1) if a multispace contains both $ W_1 $ and
$ W_2 $, then its underlying space must contain both $ \underline{W_1} $
and $ \underline{W_2} $, and hence also $ \underline{W_1} + \underline{W_2} $,
and (2) the multiplicity of elements in any multispace containing both
$ W_1 $ and $ W_2 $ is at least
$ \max\!\big\{ q^{\rank(W_1) - \dim(W_1)} , \\ q^{\rank(W_2) - \dim(W_2)} \big\} $.

We next show that the lattice $ (\calM_q(n), \subseteq) $ is graded, the
rank function being precisely $ \rank(\cdot) $ from Definition~\ref{def:mspace}.
Suppose that $ W_2 $ covers $ W_1 $, meaning that $ W_1 \subseteq W_2 $
and there exists no $ W_3 \in \calM_q(n) $ such that $ W_1 \subsetneq W_3 \subsetneq W_2 $,
and recall that the multiplicities of the elements in $ W_i $ are
$ q^{\rank(W_i) - \dim(W_i)} $, $ i = 1,2 $ (Proposition~\ref{thm:multiplicity}).
Then either $ \underline{W_2} = \underline{W_1} $, or $ \underline{W_2} $
covers $ \underline{W_1} $ in the lattice $ (\calP_q(n), \subseteq) $.
In the first case we must have $ q^{\rank(W_2) - \dim(W_2)} = q^{\rank(W_1) - \dim(W_1) + 1} $,
and hence $ \rank(W_2) = \rank(W_1) + 1 $ (since $ \dim(W_2) = \dim(W_1) $
by assumption).
In the second case we must have $ q^{\rank(W_2) - \dim(W_2)} = q^{\rank(W_1) - \dim(W_1)} $,
and again $ \rank(W_2) = \rank(W_1) + 1 $ (since $ \dim(W_2) = \dim(W_1) + 1 $
by assumption).

One can see from \eqref{eq:meetrank}, \eqref{eq:joinrank}, and the fact
that $ \dim(V_1 \cap V_2) + \dim(V_1 + V_2) = \dim(V_1) + \dim(V_2) $ for
arbitrary vector spaces $ V_1, V_2 $, that the following relation holds
for any $ W_1, W_2, \in \calM_q(n) $:
\begin{equation}
\label{eq:rankval}
  \rank(W_1 \land W_2) + \rank(W_1 \lor W_2) = \rank(W_1) + \rank(W_2) .
\end{equation}
This implies that $ (\calM_q(n), \subseteq) $ is a metric lattice
\cite[Section X.2]{birkhoff}, and hence also a modular lattice
\cite[Chapter X, Theorem 2]{birkhoff}.
The proof is complete.
\end{proof}

The well-known linear lattice $ (\calP_q(n), \subseteq) $ is a finite
sublattice of $ (\calM_q(n), \subseteq) $;
see Figure~\ref{fig:lattice} for an illustration.

\begin{figure}%[h]
 \centering
  \includegraphics[width=\columnwidth]{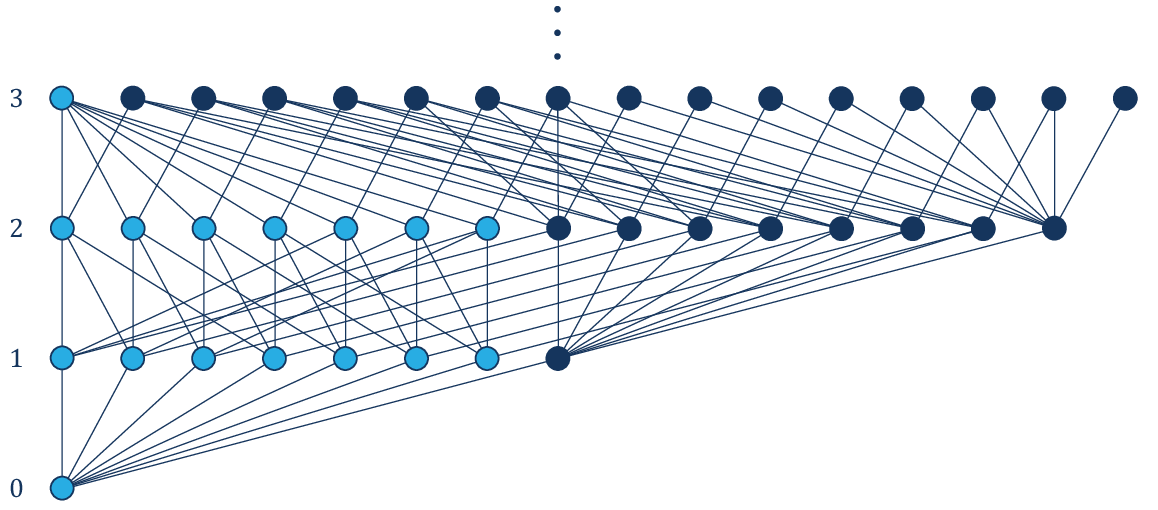}
\caption{Hasse diagram of the set of all multispaces over $ \F_2^3 $ of
rank up to $ 3 $, partially ordered by inclusion. Subspaces of $ \F_2^3 $
are represented by light-blue nodes.}
\label{fig:lattice}
\end{figure}%

\begin{proposition}
\label{thm:cardinality}
The number of multispaces of rank $ m $ over $ \F_q^n $ is
given by
\begin{equation}
\label{eq:Mqnm}
  \left| \calM_q(n, m) \right| = \sum_{k=0}^m \binom{n}{k}_{\!q} .
\end{equation}
The number of multispaces covered by $ W $, $ \rank(W) \geqslant 1 $,
in $ (\calM_q(n), \subseteq) $ is
%\footnote{$ \mathbbm{1}(\cdot) $ is the indicator function having value $ 1 $
%or $ 0 $ depending on whether or not the clause in the parenthesis is true.}
\begin{equation}
\label{eq:covered}
  %\mathbbm{1}(\rank(W) > \dim(W)) + \sum_{i=0}^{\dim(W) - 1} q^i .
	\begin{cases}
    \sum\limits_{i=0}^{\dim(W) - 1} q^i ,  & \quad \operatorname{rank}(W) = \dim(W) \\
    1 + \sum\limits_{i=0}^{\dim(W) - 1} q^i , & \quad \operatorname{rank}(W) > \dim(W)
  \end{cases}.
\end{equation}
The number of multispaces that cover $ W $ in $ (\calM_q(n), \subseteq) $ is
\begin{equation}
\label{eq:covering}
  1 + \sum_{i=0}^{n - \dim(W) - 1} q^i .
\end{equation}
\end{proposition}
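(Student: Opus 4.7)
The plan is to reduce each count to an enumeration in the projective lattice $(\calP_q(n), \subseteq)$ via Proposition~\ref{thm:multiplicity}, which encodes a multispace $W$ uniquely as a pair $(\underline{W}, t)$ with $t = \rank(W) - \dim(W) \geqslant 0$. The $q$-binomial identity $\binom{k}{1}_{\!q} = \sum_{i=0}^{k-1} q^i$ will then convert the resulting subspace counts into the displayed form.

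For \eqref{eq:Mqnm}, I would enumerate rank-$m$ multispaces by $k = \dim(\underline{W})$. Each $k \in \{0, 1, \ldots, m\}$ forces $t = m - k$, so the pair $(\underline{W}, t)$ is determined by the subspace $\underline{W} \in \calG_q(n, k)$ alone; summing $\binom{n}{k}_{\!q}$ over $k$ yields the formula (terms with $k > n$ vanish automatically by the convention stated after \eqref{eq:grassmannian}).

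For \eqref{eq:covered} and \eqref{eq:covering}, I would first read off from the last portion of the proof of Proposition~\ref{thm:lattice} the following characterization of the covering relation: $W_2$ covers $W_1$ in $(\calM_q(n), \subseteq)$ if and only if $\rank(W_2) = \rank(W_1) + 1$ together with either (a) $\underline{W_1} = \underline{W_2}$, in which case the common multiplicity grows from $q^{t}$ to $q^{t+1}$, or (b) $\underline{W_2}$ covers $\underline{W_1}$ in $(\calP_q(n), \subseteq)$, in which case the common multiplicity is unchanged. To count multispaces covered by $W$, case (a) is available precisely when $\rank(W) > \dim(W)$, contributing exactly one multispace (obtained by lowering the common multiplicity by one factor of $q$), while case (b) contributes one multispace per hyperplane of $\underline{W}$, of which there are $\binom{\dim(W)}{1}_{\!q} = \sum_{i=0}^{\dim(W)-1} q^i$; adding these gives \eqref{eq:covered}. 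To count multispaces covering $W$, case (a) is always available and contributes $1$, while case (b) contributes one multispace per subspace of $\F_q^n$ that covers $\underline{W}$, giving a further $\binom{n-\dim(W)}{1}_{\!q} = \sum_{i=0}^{n-\dim(W)-1} q^i$ possibilities, which establishes \eqref{eq:covering}.

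The only point that is not purely a reorganization of earlier material is the biconditional description of the cover relation, and even this is essentially spelled out in the closing paragraphs of the proof of Proposition~\ref{thm:lattice}; I therefore do not anticipate a genuine obstacle, and would simply restate that characterization cleanly at the start of the proof before applying it in both directions.
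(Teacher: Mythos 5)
Your proposal is correct and follows essentially the same route as the paper: it counts rank-$m$ multispaces by the dimension of the underlying space using the bijection from Proposition~\ref{thm:multiplicity}, and it enumerates covers/covered elements by splitting into the ``same underlying space, multiplicity changes by a factor of $q$'' case and the ``underlying space changes by a cover in $(\calP_q(n),\subseteq)$, multiplicity unchanged'' case, exactly as in the paper's proof. Your explicit restatement of the cover-relation biconditional up front is a slight tidying of what the paper leaves implicit, but it is not a different argument.
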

\begin{proof}
%By Proposition~\ref{thm:multiplicity}, a multispace is uniquely determined
%by its underlying space and its rank.
Every subspace of $ \F_q^n $ of dimension $ 0, \ldots, m $ is the underlying
space of a unique multispace of rank $ m $.
This fact, together with \eqref{eq:grassmannian}, implies \eqref{eq:Mqnm}.

Let us show \eqref{eq:covered}.
If $ \rank(W) = \dim(W) $, then $ W $ is a subspace of $ \Fq^n $ and hence
the multispaces it covers are the $ (\dim(W)-1) $-subspaces of $ W $, of
which there are
$ \binom{\dim(W)}{\dim(W)-1}_{\! q} = \binom{\dim(W)}{1}_{\! q} = \frac{q^{\dim(W)}-1}{q-1} $
(see~\eqref{eq:grassmannian}).
If $ \rank(W) > \dim(W) $, then there are $ \binom{\dim(W)}{1}_{\! q} + 1 $
multispaces of rank $ \rank(W) - 1 $ covered by $ W $:
$ \binom{\dim(W)}{1}_{\! q} $ multispaces having as the underlying space a
$ (\dim(W)-1) $-subspace of $ \underline{W} $ and in which the multiplicity
of each element is $ q^{\rank(W) - \dim(W)} $, and $ 1 $ multispace with
$ \underline{W} $ as the underlying space and with the multiplicity of each
element being $ q^{\rank(W) - \dim(W) - 1} $.

\eqref{eq:covering} is shown in a similar way.
There are $ \binom{n-\dim(W)}{1}_{\! q} + 1 $ multispaces of rank $ \rank(W) + 1 $
covering $ W $:
$ \binom{n-\dim(W)}{1}_{\! q} $ multispaces having as the underlying space
a $ (\dim(W)+1) $-superspace of $ \underline{W} $ and in which the multiplicity
of each element is $ q^{\rank(W) - \dim(W)} $, and $ 1 $ multispace having
$ \underline{W} $ as the underlying space and with the multiplicity of each
element being $ q^{\rank(W) - \dim(W) + 1} $.
\end{proof}

Note that $ \left| \calM_q(n, m) \right| $ does not depend on $ m $ when
$ m \geqslant n $, i.e.,
\begin{equation}
  \left| \calM_q(n, m) \right|
	 = \left| \calM_q(n, n) \right|
	 = \left| \calP_q(n) \right| ,  \quad
	 \forall m \geqslant n . 
\end{equation}
In fact, every segment of the lattice $ (\calM_q(n), \subseteq) $ involving
multispaces of rank $ m $ and $ m + 1 $, where $m > n$, namely
$ (\calM_q(n, m) \cup \calM_q(n, m+1), \subseteq) $, is an identical copy
of the segment $ (\calM_q(n, n) \cup \calM_q(n, n+1), \subseteq) $.

A natural metric can be associated with a lattice satisfying \eqref{eq:rankval}
\cite[Chapter X]{birkhoff}, namely:
\begin{equation}
\label{eq:metric}
\begin{aligned}
  d(W_1, W_2)
   &= \rank(W_1 \lor W_2) - \rank(W_1 \land W_2)  \\
	 &= 2 \rank(W_1 \lor W_2) - \rank(W_1) - \rank(W_2)  \\
	 &= \rank(W_1) + \rank(W_2) - 2 \rank(W_1 \land W_2) .
\end{aligned}
\end{equation}
This distance can also be interpreted as the length of the shortest path
from $ W_1 $ to $ W_2 $ in the Hasse diagram of $ (\calM_q(n), \subseteq) $.

We next examine how the distance between two multispaces relates to the
distance between their underlying spaces.
Define the height of a multispace by $ \height(W) = \rank(W) - \dim(W) $.
(Recall from Proposition \ref{thm:multiplicity} that the multiplicity of
each element of $ W $ is $ q^{\height(W)} $.)
Then \eqref{eq:meetrank} can be written as
\begin{equation}
\label{eq:meetrank2}
  \rank(W_1 \cap W_2)
	 =\dim\!\left(\underline{W_1}\cap\underline{W_2}\right) + \min\!\big\{ \height(W_1) , \,  \height(W_2) \big\} ,
\end{equation}
which implies
\begin{equation}
\label{eq:d}
\begin{aligned}
  d(W_1, W_2)
	 &= \rank(W_1) + \rank(W_2) - 2\rank(W_1 \cap W_2)  \\
   &= \dim\!\left(\underline{W_1}\right) + \height(W_1) + \dim\!\left(\underline{W_2}\right) + \height(W_2)  \\
      &\phantom{=XX} - 2 \dim\!\left(\underline{W_1}\cap\underline{W_2}\right) - 2 \min\!\big\{ \height(W_1) , \,  \height(W_2) \big\}  \\
   &= d\!\left(\underline{W_1}, \underline{W_2}\right) + \left| \height(W_1) - \height(W_2) \right| ,
\end{aligned}
\end{equation}
where in the last equality we used the fact that $ a + b - 2\min\{a , b\} = |a - b| $.
This relation directly implies the following claim.

\begin{proposition}
\label{thm:distance}
  For any $ W_1, W_2 \in \calM_q(n) $,
\begin{equation}
  d(W_1, W_2) \geqslant d\!\left(\underline{W_1}, \underline{W_2}\right) ,
\end{equation}
with equality if and only if $ \height(W_1) = \height(W_2) $.

For any $ W_1, W_2 \in \calM_q(n, m) $, \begin{equation}
\label{eq:metric1}
  d(W_1, W_2) = d(\underline{W_1}, \underline{W_2}) + |\dim(W_1) - \dim(W_2)|
\end{equation}
and, consequently,
\begin{equation}
\label{eq:metric2}
  d(\underline{W_1}, \underline{W_2}) \leqslant d(W_1, W_2) \leqslant 2 d(\underline{W_1}, \underline{W_2}) ,
\end{equation}
where equality on the left-hand side holds if and only if $ \dim(W_1) = \dim(W_2) $,
and the one on the right-hand side holds if and only if
$ \underline{W_1} \subseteq \underline{W_2} $ or $ \underline{W_2} \subseteq \underline{W_1} $.
\end{proposition}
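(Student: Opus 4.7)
The plan is to deduce all parts of the proposition directly from equation~\eqref{eq:d}, which has already been established in the paragraph preceding the statement and reads $d(W_1, W_2) = d(\underline{W_1}, \underline{W_2}) + |\height(W_1) - \height(W_2)|$. Because each assertion is a one-line consequence of this identity combined with a standard subspace identity, I do not expect any genuine obstacle; the main bookkeeping is in tracking the four separate equality conditions correctly.

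For the first bound, valid on all of $\calM_q(n)$, I would simply observe that the term $|\height(W_1) - \height(W_2)|$ in \eqref{eq:d} is nonnegative, and vanishes iff the two heights coincide, which gives both the inequality and its equality condition. Specializing next to $W_1, W_2 \in \calM_q(n, m)$, the common rank $m$ allows me to rewrite $\height(W_i) = m - \dim(W_i)$, so that $|\height(W_1) - \height(W_2)|$ collapses to $|\dim(W_1) - \dim(W_2)|$; substituting this into \eqref{eq:d} yields \eqref{eq:metric1}.

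Finally, for the two-sided bound \eqref{eq:metric2}, I would start from \eqref{eq:metric1}. The left-hand inequality is immediate, with equality iff $\dim(W_1) = \dim(W_2)$. For the right-hand inequality I need $|\dim(W_1) - \dim(W_2)| \leqslant d(\underline{W_1}, \underline{W_2})$; this follows by combining the trivial bound $\dim(\underline{W_1} \cap \underline{W_2}) \leqslant \min\{\dim(\underline{W_1}), \dim(\underline{W_2})\}$ with the standard identity $d(\underline{W_1}, \underline{W_2}) = \dim(\underline{W_1}) + \dim(\underline{W_2}) - 2\dim(\underline{W_1} \cap \underline{W_2})$, which together give $d(\underline{W_1}, \underline{W_2}) \geqslant |\dim(\underline{W_1}) - \dim(\underline{W_2})|$. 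Equality in this step is precisely $\dim(\underline{W_1} \cap \underline{W_2}) = \min\{\dim(\underline{W_1}), \dim(\underline{W_2})\}$, which is equivalent to $\underline{W_1} \subseteq \underline{W_2}$ or $\underline{W_2} \subseteq \underline{W_1}$, finishing the equality characterization.
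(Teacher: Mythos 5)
Your proposal is correct and follows exactly the route the paper intends: the paper gives no separate proof, stating only that relation \eqref{eq:d} ``directly implies'' the proposition, and your argument is precisely that deduction carried out in detail (including the correct handling of all four equality conditions).
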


The graph $ \Gamma_{q}(n, m) $ associated with the metric space $ (\calM_q(n, m), d) $
is obtained by joining by an edge any two nodes/multispaces that are at
distance $ 2 $ (the distance between multispaces of the same rank is
necessarily even, see \eqref{eq:metric1}).
The graph $ \Gamma_{q}(n,m) $ contains as subgraphs the Grassmann graphs
corresponding to $ (\calG_q(n,k), d) $, $ k = 1, \ldots, m $.
Unlike Grassmann graphs, however, $ \Gamma_{q}(n, m) $ is not distance-regular.

%Kontraprimer: $ d(\{001^2\}, \{010^2\}) = 2 $, $ d(\{001, 010\}, \{011, 100\}) = 2 $.
%Broj multiprostora na rastojanju 1 od $ \{001^2\} $ i rastojanju 1 od $ \{010^2\} $ je 1 ($ \{001, 010\} $).
%Broj multiprostora na rastojanju 1 od $ \{001, 010\} $ i rastojanju 1 od $ \{011, 100\} $ je 2 ($ \{001, 100\} $, $ \{010, 100\} $).

\section{Multispace Codes}
\label{sec:codes}

In this section we illustrate an application of multispaces in coding theory.
The main purpose of this discussion is to present the framework for
\emph{multispace codes} and their applications, leaving a more in depth
investigation of such codes for future work.

Subspace codes are codes in $ \calP_q(n) $; when restricted to $ \calG_q(n, k) $,
they are called constant-dimension subspace codes.
These objects were proposed in \cite{koetter+kschischang}%
\footnote{See also the earlier work \cite{wang} for a different application
of codes in $ \calP_q(n) $.}
as appropriate constructs for error correction in communication channels that
deliver to the receiver random linear combinations of the transmitted vectors
from $ \Fq^n $.
A practical instantiation of such a channel would be a network whose nodes/routers
forward on their outgoing links random linear combinations of the incoming
vectors/packets, a paradigm known as `random linear network coding' \cite{ho}.
Given a channel of this kind, one naturally comes to the idea that information
the transmitter is sending should be encoded in an object that is invariant
(with high probability) under random linear transformations, namely a vector
subspace of $ \Fq^n $.
A code for such a channel should therefore be defined in $ \calP_q(n) $ and
should, as always, satisfy certain minimum distance requirements in order for
the receiver to be able to correct a certain number of errors.
The appropriate way of measuring the distance between subspaces depends on the
details of the model, but the most common choice is precisely the metric \eqref{eq:metric}.
In the described setting, the transmitter would typically send a basis of the
selected codeword/subspace through the channel, and the receiver's task would
be to reconstruct the codeword from the received vectors.

We propose the use of codes in $ \calM_q(n) $ for the same purpose.
The scenario we have in mind is the following: the transmitter sends a
generating multiset $ B $ of the selected codeword/multispace $ W $ (over
$ \Fq^n $), and the channel delivers to the receiver a multiset $ B' $ of
random linear combinations of the transmitted vectors.
Note that multiple copies of the same packet/vector can easily be generated,
so the assumption that the transmitter sends a multiset of vectors is physically
well-justified and quite natural in this context.
The scenario just described was in fact the motivation for defining multispaces
as in Definition~\ref{def:mspace}.
It follows from Proposition~\ref{thm:fullrank} that, if the channel action
is represented by a full-rank $ \rank(W) \times \rank(W) $ matrix, then the
transmitted multispace is preserved in the channel, i.e., the receiver will
reconstruct it perfectly from the received multiset $ B' $.
In general, however, the channel may introduce errors.
We next show that two reasonable types of errors in this context -- deletions
and dimension-reductions -- result in the received multispace that is at a
bounded distance, depending only on the number of errors, from the transmitted
multispace.
Consequently, one can design a code in $ \calM_q(n) $ that is guaranteed to
correct a given number of errors by specifying its minimum distance with
respect to the metric \eqref{eq:metric}.

\newpage
\begin{proposition}
Let $ \bs{b'} = \bs{b} T $, where $ \bs{b} = (b_1, \ldots, b_m) $, $ \bs{b'} = (b'_1, \ldots, b'_{m'}) $,
$ b_i, b'_i \in \Fq^n $, and $ T $ is an $ m \times m' $ matrix over $ \Fq $.
Denote also $ B = \ldblbrace b_1, \ldots, b_m \rdblbrace $,
$ B' = \ldblbrace b'_1, \ldots, b'_m \rdblbrace $.\newline
(1) If $ m' = m-s $ and $ \rank T = m' $, then $ d(\mspan_q(B), \mspan_q(B')) = s $.\newline
(2) If $ m' = m $ and $ \rank T = m - s $, then $ d(\mspan_q(B), \mspan_q(B')) \leqslant 2s $.
\end{proposition}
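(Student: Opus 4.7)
For part (1), the plan is to prove the stronger statement $\mspan_q(B') \subseteq \mspan_q(B)$, from which \eqref{eq:metric} gives $d(\mspan_q(B),\mspan_q(B')) = \rank(\mspan_q(B)) - \rank(\mspan_q(B')) = m - m' = s$ immediately (since containment forces $\mspan_q(B) \land \mspan_q(B') = \mspan_q(B')$ and $\mspan_q(B) \lor \mspan_q(B') = \mspan_q(B)$). To establish the containment, I will compare multiplicities pointwise. By Definition~\ref{def:mspan}, the multiplicity of any $v \in \Fq^n$ in $\mspan_q(B')$ equals the number of tuples $\alpha \in \Fq^{m'}$ with $\bs{b}T\alpha = v$, while the multiplicity of $v$ in $\mspan_q(B)$ equals the number of $\beta \in \Fq^m$ with $\bs{b}\beta = v$. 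The map $\alpha \mapsto T\alpha$ is injective because $T$ has full column rank $m'$, and it sends the first set into the second; hence the multiplicity of $v$ in $\mspan_q(B')$ is at most its multiplicity in $\mspan_q(B)$ for every $v$, which is the containment required.

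For part (2), the plan is to interpolate an intermediate multispace via a rank factorization of $T$. Write $T = T_1 T_2$ with $T_1 \in \Fq^{m \times (m-s)}$ and $T_2 \in \Fq^{(m-s) \times m}$ both of rank $m-s$, and set $\bs{b''} = \bs{b} T_1$, $B'' = \ldblbrace b''_1, \ldots, b''_{m-s}\rdblbrace$, $W'' = \mspan_q(B'')$. Applying part (1) to $\bs{b''} = \bs{b} T_1$ yields $d(\mspan_q(B), W'') = s$. For the second leg of the triangle I will show $W'' \subseteq \mspan_q(B')$ directly via Proposition~\ref{thm:multiplicity}: since $T_2$ has full row rank, its image is all of $\Fq^{m-s}$, so the linear combinations $\bs{b''}T_2 \alpha$ as $\alpha$ ranges over $\Fq^{m}$ coincide with the linear combinations $\bs{b''}\beta$ as $\beta$ ranges over $\Fq^{m-s}$, giving $\underline{\mspan_q(B')} = \underline{W''}$. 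Therefore $\dim(\mspan_q(B')) = \dim(W'')$, and combined with $\rank(\mspan_q(B')) - \rank(W'') = s > 0$ this forces $\height(W'') < \height(\mspan_q(B'))$, yielding the containment and hence $d(W'', \mspan_q(B')) = s$. The triangle inequality applied to the metric \eqref{eq:metric} then gives $d(\mspan_q(B), \mspan_q(B')) \leqslant 2s$.

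The only genuinely creative step is the rank-factorization trick in part (2): it realises an arbitrary rank-deficient square channel action as a pure deletion of $s$ vectors (handled by part (1)) followed by a pure ``padding'' back to $m$ vectors (handled by direct containment, since padding can only raise the height without touching the underlying space). Everything else is routine bookkeeping with Definitions~\ref{def:mspan}--\ref{def:mspace} and Proposition~\ref{thm:multiplicity}, and I do not foresee any real technical obstacle.
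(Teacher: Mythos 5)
Your proof is correct, and part (2) takes a genuinely different route from the paper. For part (1) the paper factors the channel action as a full-rank $m\times m$ transformation (Proposition~\ref{thm:fullrank}) followed by the deletion of $s$ coordinates, whereas you verify the containment $\mspan_q(B')\subseteq\mspan_q(B)$ directly by the injectivity of $\alpha\mapsto T\alpha$; the two arguments are close in spirit and yours is, if anything, more self-contained. For part (2) the divergence is real: the paper argues in one step, noting that $\underline{\mspan_q(B')}\subseteq\underline{\mspan_q(B)}$ with $\dim(\mspan_q(B'))\geqslant\dim(\mspan_q(B))-s$ and invoking the same-rank distance formula \eqref{eq:metric1}, while you factor $T=T_1T_2$ through $\Fq^{m-s}$, reduce the first leg to part (1), show the second leg is a pure height increase, and finish with the triangle inequality. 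The paper's route is shorter, but yours buys two things: it cleanly separates the ``deletion'' and ``padding'' effects of a rank-deficient square matrix, and it sidesteps a small inaccuracy in the paper's own wording --- the paper asserts $W'\subseteq W$ as multisets, which is false in general (elements of $W'$ have \emph{larger} multiplicity $q^{\height(W')}\geqslant q^{\height(W)}$); only the containment of underlying spaces is true and needed there. One cosmetic remark: your phrase ``$\height(W'')<\height(\mspan_q(B'))$'' presumes $s>0$; for $s=0$ the claim is trivial by Proposition~\ref{thm:fullrank}, so nothing is lost.
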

\begin{proof}
We think of $ B $, $ B' $ as the transmitted and the received multisets
of vectors, respectively, and of $ W = \mspan_q(B) $, $ W' = \mspan_q(B') $
as the corresponding multispaces.

\emph{(1)}
The channel action (i.e., the action of the operator $ T $) can in this
case be thought of as first transforming $ \bs{b} $ to $ \bs{\tilde{b}} $
by a full-rank $ m \times m $ matrix, and then transforming $ \bs{\tilde{b}} $
to $ \bs{b'} $ by deleting $ s $ of the former's coordinates.
We know from Proposition~\ref{thm:fullrank} that $ W = \tilde{W} $.
Also, it is straightforward to see that deleting $ s $ elements of the generating
multiset $ \tilde{B} $, i.e., $ B' \subseteq \tilde{B} $, $ |B'| = |\tilde{B}| - s $,
results in a multispace $ W' $ which is contained in $ \tilde{W} $ and
whose rank is $ \rank(W') = \rank(\tilde{W}) - s $.
Therefore, under the stated assumptions, $ W \land W' = W' $, and hence
$ d(W, W') = s $.

\emph{(2)}
This scenario corresponds to the case when the channel delivers $ m $
vectors to the receiver (as many as were transmitted), but the transformation
is not full-rank.
A simple linear algebra shows that, under these assumptions, $ \dim(W') \geqslant \dim(W) - s $.
Since also $ W' \subseteq W $ and $ \rank(W) = \rank(W') = m $, we conclude
from Proposition~\ref{thm:distance} that $ d(W, W') \leqslant 2s $.
\end{proof}

The main advantage of the transition from subspaces to multispaces that is
proposed here is that the code space is in the latter case infinitely larger
-- $ \calM_q(n) $ instead of $ \calP_q(n) $.
In particular, if the parameters $ n $ and $ q $ (which, in the network
coding scenario, correspond to packet length and alphabet size, respectively)
are fixed, subspace codes are necessarily bounded, both in cardinality and
minimum distance.
This fact stands in sharp contrast to most other communication scenarios in
which codes of arbitrarily large cardinality and minimum distance exist even
over a fixed alphabet.
%, and one can meaningfully consider various asymptotic regimes.
With multispaces on the other hand, taking, say, $ \bigcup_{j=0}^{m} \calM_q(n, j) $
as the code space, one can consider the asymptotic regime $ m \to \infty $
even if $ n $ and $ q $ are fixed.
In this regime the size of the code space scales as
\begin{equation}
  \sum_{j=0}^{m} |\calM_q(n, j)|  \;\sim\;  m \cdot \sum_{k=0}^{n} \binom{n}{k}_{\! q}
\end{equation}
(see \eqref{eq:Mqnm}) and is unbounded.

\vspace{1cm}
To conclude the paper, we again emphasize that the aim of the brief discussion
in this section was to introduce the notion of multispace codes -- a generalization
of subspace codes -- and argue that they can be used in the same scenarios as
subspace codes, with an important advantage that they are capable of achieving
larger (in some cases infinitely larger) information rates.
Further work on the subject would entail devising concrete constructions of
codes in $ \bigcup_{j=0}^{m} \calM_q(n, j) $ having a desired minimum distance,
and deriving bounds on the cardinality of optimal codes (similar to those in,
e.g., \cite{etzion+vardy}).
%The latter will likely be a more challenging task than its subspace counterpart
%(see, e.g., \cite{etzion+vardy}) because the graph $ \Gamma_q(n, m) $ is not
%distance-regular.
We note that certain bounds on codes in modular lattices, namely sphere-packing
and Singleton bounds, have already been studied in \cite{kendziora+schmidt}.

\newpage
%\vspace{5mm}
\bibliographystyle{elsarticle-num}

\begin{thebibliography}{10}

%\bibitem{artin}
   %M. Artin,
   %\emph{Algebra},
   %Prentice-Hall, Inc., 1991.
\bibitem{birkhoff}
   G. Birkhoff,
   \emph{Lattice Theory},
   3rd ed., American Mathematical Society, 1973.
%\bibitem{blackburn+etzion}
   %S. R. Blackburn and T. Etzion,
   %``The Asymptotic Behavior of {G}rassmannian Codes,''
   %\emph{IEEE Trans. Inform. Theory}, vol.~58, no.~10, pp.~6605--6609, 2012.
%\bibitem{etzion}
   %T. Etzion, S. Kurz, K. Otal, and F. \:{O}zbudak,
   %``Subspace Packings: Constructions and Bounds,''
   %\emph{Des. Codes Cryptogr.} {\bf 88}, 1781--1810, 2020.
\bibitem{etzion+vardy}
   T. Etzion and A. Vardy,
   ``Error-Correcting Codes in Projective Space,''
   \href{https://doi.org/10.1109/TIT.2010.2095232}{\emph{IEEE Trans. Inform. Theory}, {\bf 57}(2) (2011) 1165--1173.}
\bibitem{ho}
   T. Ho, M. M\'{e}dard, R. K\"{o}tter, D. Karger, M. Effros, J. Shi, and B. Leong,
   ``A Random Linear Network Coding Approach to Multicast,''
   \href{https://doi.org/10.1109/TIT.2006.881746}{\emph{IEEE Trans. Inform. Theory}, {\bf 52}(10) (2006) 4413--4430.}
\bibitem{kendziora+schmidt}
   A. Kendziorra and S. E. Schmidt,
   ``Network Coding with Modular Lattices,''
   \href{https://doi.org/10.1142/S0219498811005208}{\emph{J. Algebra Appl.}, {\bf 10}(6) (2011) 1319--1342.}
\bibitem{koetter+kschischang}
   R. K\"{o}tter and F. R. Kschischang,
   ``Coding for Errors and Erasures in Random Network Coding,''
   \href{https://doi.org/10.1109/TIT.2008.926449}{\emph{IEEE Trans. Inform. Theory}, {\bf 54}(8) (2008) 3579--3591.}
\bibitem{krotov}
   D. S. Krotov,
   ``Multifold 1-Perfect Codes,''
   Preprint available at http://arxiv.org/abs/2212.03644.
\bibitem{krotov2}
   D. S. Krotov and I. Yu. Mogilnykh,
   ``Multispreads,''
   Preprint available at https://arxiv.org/abs/2312.07883.
\bibitem{fields}
   R. Lidl and H. Niederreiter,
   \emph{Finite Fields},
   Cambridge University Press, 1997.
\bibitem{roman}
   S. Roman,
   \emph{Advanced Linear Algebra},
   3rd ed., Springer, 2008.
%\bibitem{stanley}
   %R. P. Stanley,
   %``Some Applications of Algebra to Combinatorics,''
   %\emph{Discrete Appl. Math.}, vol.~34, no.~1-3, pp.~241--277, 1991.
%\bibitem{stanley}
   %R. P. Stanley,
   %\emph{Enumerative Combinatorics}, vol I,
   %Cambridge University Press, 1997.
\bibitem{vanlint+wilson}
   J. H. van Lint and R. M. Wilson,
   \emph{A Course in Combinatorics},
   2nd ed., Cambridge University Press, 2001.
\bibitem{wang}
   H. Wang, C. Xing, and R. Safavi-Naini,
   ``Linear Authentication Codes: Bounds and Constructions,''
   \href{https://doi.org/10.1109/TIT.2003.809567}{\emph{IEEE Trans. Inform. Theory}, {\bf 49}(4) (2003) 866--872.}
%\bibitem{cai}
   %N. Cai,
   %``Localized Error Correction in Projective Space,''
   %\emph{IEEE Trans. Inform. Theory}, vol. 59, no. 6, pp. 3282--3294, 2013.
%\bibitem{etzion+silberstein}
   %T. Etzion, N. Silberstein,
   %``Error-Correcting Codes in Projective Spaces via Rank-Metric Codes and {F}errers Diagrams,''
   %\emph{IEEE Trans. Inform. Theory}, vol. 55, no. 7, pp. 2909--2919, 2009.
%\bibitem{gadouleau+yan}
   %M. Gadouleau, Z. Yan,
   %``Packing and Covering Properties of Subspace Codes for Error Control in Random Linear Network Coding,'' 
   %\emph{IEEE Trans. Inform. Theory}, vol.~56, no.~5, pp.~2097--2108, 2010.
%\bibitem{khaleghi+kschischang}
   %A. Khaleghi, F. R. Kschischang,
   %``Projective Space Codes for the Injection Metric,''
   %in \emph{Proc. 11th Canad. Workshop Inform. Theory (CWIT)}, pp.~9--12, Jun. 2009.
%\bibitem{koetter+medard}
   %R. K\"{o}tter, M. M\'{e}dard,
   %``An Algebraic Approach to Network Coding,''
   %\emph{IEEE/ACM Trans. Netw.}, vol.~11, no.~5, pp.~782--795, 2003.
%\bibitem{prasad+rajan}
   %K. Prasad, B. Sundar Rajan,
   %``A Matroidal Framework for Network-Error Correcting Codes,''
   %\emph{IEEE Trans. Inf. Theory}, vol. 61, no. 2, pp. 836--872, 2015.
%\bibitem{silberstein+etzion}
   %N. Silberstein, T. Etzion,
   %``Enumerative Coding for {G}rassmannian Space,''
   %\emph{IEEE Trans. Inform. Theory}, vol. 57, no. 1, pp. 365--374, 2011.
%\bibitem{silberstein+trautmann}
   %N. Silberstein, A.-L. Trautmann,
   %``Subspace Codes Based on Graph Matchings, {F}errers Diagrams, and Pending Blocks,''
   %\emph{IEEE Trans. Inform. Theory}, vol. 61, no. 7, pp. 3937--3953, 2015.
%\bibitem{silva+kschischang}
   %D. Silva, F. R. Kschischang,
   %``On Metrics for Error Correction in Network Coding,''
   %\emph{IEEE Trans. Inform. Theory}, vol. 55, no. 12, pp. 5479--5490, 2009.
%\bibitem{pai+rajan}
   %B. Srikanth Pai, B. Sundar Rajan,
   %``On the Bounds of Certain Maximal Linear Codes in a Projective Space,''
   %\emph{IEEE Trans. Inform. Theory}, vol. 61, no. 9, pp. 4923--4927, 2015.
%\bibitem{yang3}
   %S. Yang, R. W. Yeung,
   %``Characterizations of Network Error Correction/Detection and Erasure Correction,''
   %in \emph{NetCod}, 2007.
%\bibitem{yang2}
   %S. Yang, R. W. Yeung, C. K. Ngai,
   %``Refined Coding Bounds and Code Constructions for Coherent Network Error Correction,''
   %\emph{IEEE Trans. Inform. Theory}, vol.~57, no.~9, pp.~1409--1424, 2011.
%\bibitem{zhang}
   %Z. Zhang,
   %``Linear Network Error Correction Codes in Packet Networks,''
   %\emph{IEEE Trans. Inform. Theory}, vol.~54, no.~1, pp.~209--218, 2008.

\end{thebibliography}

\end{document}